\documentclass[reprint,longbibliography,aps,prl,table,superscriptaddress]{revtex4-2}
\usepackage[english]{babel}
\usepackage[T1]{fontenc}
\usepackage{amsmath,amsthm,amssymb,bbm,amsfonts,graphicx,framed,mathtools,dsfont}
\usepackage{url}
\usepackage{tikz}
\usetikzlibrary{decorations.pathreplacing,decorations.pathmorphing}
\usepackage{enumerate}
\usepackage{mathrsfs}
\usepackage{relsize}
\usepackage[colorlinks=true, allcolors=blue]{hyperref}
\usepackage[utf8]{inputenc}
\usepackage{listings}
\usepackage{stmaryrd}
\usepackage{graphicx}
\usepackage{longtable}

\lstdefinestyle{mystyle}{
  backgroundcolor=\color{backcolour},
  commentstyle=\color{codegreen},
  keywordstyle=\color{magenta},
  numberstyle=\tiny\color{codegray},
  stringstyle=\color{codepurple},
  basicstyle=\ttfamily\footnotesize,
  breakatwhitespace=false,
  breaklines=true,
  captionpos=b,
  keepspaces=true,
  numbers=left,
  numbersep=5pt,
  showspaces=false,
  showstringspaces=false,
  showtabs=false,
  tabsize=2
}
\newcommand{\ka}{\kappa}

\newcommand{\la}{\lambda}

\newcommand{\si}{{\sigma}}

\newcommand{\eq}[1]{Eq.~(\ref{eq:#1})}
\newcommand{\fig}[1]{Fig.~\ref{fig:#1}}

\newcommand{\mc}[1]{\mathcal{#1}}
\newcommand{\id}{\mathbbm{1}}
\newcommand{\tr}{\mathrm{tr}}

\newcommand{\ket}[1]{\left\lvert #1 \right\rangle} 
\newcommand{\bra}[1]{\left\langle #1 \right\rvert} 

\let\perptmp\perp
\renewcommand{\perp}{{\! \mathsmaller{\perptmp}}}

\newcommand{\mrm}{\mathrm}

\newcommand{\nodagger}{{\phantom{\dagger}}}

\newcommand{\dd}{\mathrm{d}}

\newtheorem*{proposition}{Proposition}

\theoremstyle{definition}

\renewcommand{\sp}[2]{\langle #1,#2 \rangle}

\makeatletter
\def\@bibdataout@aps{%
  \immediate\write\@bibdataout{%
    @CONTROL{%
      apsrev41Control%
      \longbibliography@sw{%
        ,author="08",editor="1",pages="1",title="0",year="1"%
      }{%
        ,author="08",editor="1",pages="1",title="",year="1"%
      }%
    }%
  }%
  \if@filesw \immediate \write \@auxout {\string \citation {apsrev41Control}}\fi
}
\makeatother

\definecolor{blue}{rgb}{0.12156862745098039, 0.4666666666666667, 0.7058823529411765}
\definecolor{orange}{rgb}{1.0, 0.4980392156862745, 0.054901960784313725}
\definecolor{green}{rgb}{0.17254901960784313, 0.6274509803921569, 0.17254901960784313}
\definecolor{red}{rgb}{0.8392156862745098, 0.15294117647058825, 0.1568627450980392}
\definecolor{purple}{rgb}{0.5803921568627451, 0.403921568627451, 0.7411764705882353}
\definecolor{brown}{rgb}{0.5490196078431373, 0.33725490196078434, 0.29411764705882354}
\definecolor{pink}{rgb}{0.8901960784313725, 0.4666666666666667, 0.7607843137254902}
\definecolor{pygray}{rgb}{0.4980392156862745, 0.4980392156862745, 0.4980392156862745}
\definecolor{olive}{rgb}{0.7372549019607844, 0.7411764705882353, 0.13333333333333333}
\definecolor{9}{rgb}{0.09019607843137255, 0.7450980392156863, 0.8117647058823529}

\definecolor{lightblue}{rgb}{0.9,.95,1}

\newcommand{\diag}{\mathrm{diag}}

\usepackage{calc}

\usepackage{pgffor}

\begin{document}

\title{Non-Markovianity induced by Pauli-twirling}
 
\author{Joris Kattemölle}
\email{j.kattemoelle at fz-juelich dot de}
\thanks{This author contributed equally to this work.}
\affiliation{Department of Physics and IQST, University of Konstanz, 78457 Konstanz, Germany}
\affiliation{Institute for Theoretical Nanoelectronics (PGI-2),
Forschungszentrum Jülich, 52428 Jülich, Germany}
\affiliation{Institute for Quantum Information, RWTH Aachen University, 52056 Aachen, Germany}
\author{Balázs Gulácsi}
\email{balazs.gulacsi@uni-konstanz.de}
\thanks{This author contributed equally to this work.}
\affiliation{Department of Physics and IQST, University of Konstanz, 78457 Konstanz, Germany}
\author{Guido Burkard}
\email{guido.burkard@uni-konstanz.de}
\affiliation{Department of Physics and IQST, University of Konstanz, 78457 Konstanz, Germany}

\date{\today}

\begin{abstract} 
Noise forms a central obstacle to effective quantum information processing. Recent experimental advances have enabled the tailoring of noise properties through Pauli twirling, transforming arbitrary noise channels into Pauli channels. This underpins theoretical descriptions of fault-tolerant quantum computation and forms an essential tool in noise characterization and error mitigation. Pauli-Lindblad channels have been introduced to aptly parameterize quasi-local Pauli errors across a quantum register, excluding  
negative Pauli-Lindblad parameters relying on the Markovianity of the underlying noise processes.
We point out that caution is required when parameterizing channels as Pauli-Lindblad channels with nonnegative parameters. 
For this, we study the effects of Pauli twirling on Markovianity. We use the notion of Markovianity of a channel (rather than that of an entire semigroup) and prove a general Pauli channel is non-Markovian if and only if at least one of its Pauli-Lindblad parameters is negative. Using this, we show that Markovian quantum channels often become non-Markovian after Pauli twirling.  
The Pauli-twirling induced non-Markovianity necessitates the use of negative Pauli-Lindblad parameters for a correct noise description in experimentally realistic scenarios. 
An important example is the implementation of the $\sqrt{X}$-gate under standard Markovian noise.
As such, our results have direct implications for quantum error mitigation protocols that rely on accurate noise characterization.
\end{abstract}

\maketitle

\section{\label{sec:level1}Introduction}

Over the past decades, quantum computing has evolved from a theoretical curiosity~\cite{Feynman1982,Deutsch1985} to a practical reality. Despite impressive experimental progress~\cite{Cirac1995,Monroe1995,Chuang1998,Nakamura1999,Koch2007,Barends2014,QEC2023,QEC2025}, current devices remain far from ideal, with noise threatening to wash away the very quantum effects that give them an advantage over their classical counterparts. Among others, this noise arises from a combination of imperfect quantum gate implementations~\cite{Magesan2011,Magesan2012}, crosstalk~\cite{Mundada2019,Heinz2021,ParradoRodriguez2021}, and uncontrolled interactions with the surrounding environment~\cite{Chirolli2008,Paladino2014}. The central obstacle to utility-scale quantum computation is therefore the understanding and subsequent reduction of noise processes and the errors they induce.

Meeting this challenge requires accurate error models. Mathematically, the effect of quantum noise can be described by quantum channels; completely positive and trace-preserving linear maps on density matrices. One class of such channels are the Pauli channels, where a tensor product of Pauli operators $P_a$
(known as a Pauli word or a Pauli string) is applied to the system with probability $p_a$. Pauli channels are particularly appealing due to the relative ease with which they can be characterized, analyzed and implemented for simulations on a classical computer~\cite{flammia2020efficient}. The empirical success of Pauli channels as effective noise models is largely enabled by Pauli twirling, also known as randomized compiling~\cite{Wallman2016rc}. This technique allows an arbitrary noise channel to be ``twirled'' into an effective Pauli channel and has become a standard method~\cite{Hashim2021,Ware2021,Ville2022,Gu2023,Jain2023}. A quantum algorithm is implemented by a quantum circuit that is compiled into a sequence of elementary quantum gates. To perform randomized compiling, Pauli words are injected into the circuit before and after certain gates in a way that preserves the overall logical operation of the circuit. Repeating this process with different random choices produces a set of randomized but logically equivalent circuits. Averaging the outcomes over these produces an effective noise channel in Pauli form. 

Although representing noise as a Pauli channel has its theoretical appeal and experimental practicality, it does not eliminate all challenges. As the number of qubits $n$ increases, the parameter count required to describe a general Pauli channel grows exponentially, rendering a full description intractable and experimentally inaccessible.
To address this scaling problem, it has recently been proposed to parameterize Pauli noise channels using a sparse set of Lindbladian generators~\cite{berg2023probabilistic}. These Pauli-Lindblad (PL) channels are written as
\begin{gather}\label{eq:PL}
    \mc E(\rho)=e^{\mc L}\rho,\quad\mc L (\rho)=\sum_{a\in\mathcal S}\la_a(P_a \rho P_a-\rho),
\end{gather}
where $\mc S$ is a subset of all $4^n$ possible Pauli words and on the grounds of physicality \cite{vdBevidence2023}, the PL parameters are assumed to be nonnegative, $\lambda_a\geq0$. The complexity of this model is determined by the size of the subset $\mathcal S$, which scales polynomially with $n$ under the assumption of a bounded noise correlation length. PL models have been studied theoretically \cite{Tax2022,berg2024techniques,Malekakhlagh2025} and experimentally~\cite{berg2023probabilistic,vdBevidence2023}, where the effectiveness of the model is evidenced by the implementation of error-mitigation techniques such as probabilistic error cancellation and zero-noise extrapolation based on these models.

Markovianity in quantum systems is usually seen as a property of a physical process, formalizing in one way or another that at any time, future states of the system only depend on the current state of the systems, and not previous ones. 
Using this notion, Markovianity can also be defined for a single quantum channel $\mc E$, if there exists a Markovian physical process such that after an elapsed time $t$, the effect of the process is the channel $\mc E$. 

In this work, we study the interplay of Pauli-twirling and Markovianity using the framework of open system dynamics. First, we show that any Pauli channel can be parameterized as a PL channel if we lift the restriction that the PL parameters are nonnegative. The (generalized) PL channels with at least one negative PL parameter can in fact be physical and turn out to exactly coincide with the non-Markovian Pauli channels. We apply this characterization of non-Markovianity and find that, perhaps surprisingly,  Pauli-twirling can break the Markovianity of a quantum channel. 

We present two examples where Markovianity-breaking arises. The first is a single-qubit protocol subject to Markovian dephasing and relaxation. The protocol is explicitly designed to be simple, yet showing strong Pauli-twirling-induced non-Markovianity. A second example is the noise channel of the standard implementation of the $\sqrt{X}$ gate under Markovian dephasing and relaxation. We estimate that these effects can be observed under realistic experimental settings, especially in dephasing-biased quantum hardware, and may in fact already be present in current implementations of the $\sqrt{X}$ gate. 

These insights are particularly important for quantum error mitigation protocols, for which accurate noise descriptions are essential~\cite{QEMreview2023}. Indeed, the possibility of negative PL model parameters, even if their magnitude is smaller than that of the positive ones, introduces a qualitative modification of the noise models, and their inclusion is therefore crucial for an accurate description and correct error-mitigated results. 

\subsection{Preliminaries}
\paragraph{\normalfont\small\textbf{Markovianity by divisibility.}}
There are numerous nonequivalent mathematical definitions of Markovianity that capture the concept of memoryless quantum evolution. One definition is based on the divisibility of families of quantum channels. Consider a two-parameter family of quantum channels $\mathscr E=\{\mc E_{t_3,t_1}\mid t_3\geq t_1 \geq 0\}$, where $\mc E_{t_3,t_1}$ describes the time evolution of a quantum system from $t_1$ to $t_3$. This family is called \emph{divisible} if for any intermediate time $t_2$, the total time evolution can be concatenated, i.e., if for all $t_3\geq t_2 \geq t_1 \geq 0$, we have  $\mc E_{t_3,t_1}=\mc E_{t_3,t_2}\circ\mc E_{t_2,t_1}$ \cite{rivas2014quantum}. We say that the two-parameter family $\mathscr E$ is \emph{Markovian by divisibility} if it satisfies the divisibility condition.

A fundamental result holds when the two-parameter family $\mathscr E$ is differentiable. 
The Gorini-Kossakowski-Sudarshan-Lindblad (GKSL) theorem~\cite{rivas2014quantum}  states that
an operator $\mc L_t$ 
is the generator of the divisible family of quantum channels if and only if it can be written in the form
\begin{equation}\label{eq:generator}
\frac{\dd \rho}{\dd t}=\mc L_t[\rho]=-i[H(t),\rho]+\mathcal D_t[\rho],
\end{equation}
with $H(t)$ Hermitian and with the dissipator
\begin{equation}\label{eq:dissipator}
\mathcal D_t[\rho]=\sum_{a,b} \Gamma_{ab}(t)\left[F_a \rho F_b^\dagger-\frac{1}{2}\{F_b^\dagger F_a,\rho\}\right],
\end{equation}
where $\Gamma_{ab}(t)\geq 0$ (positive semidefinite) for all $t\geq 0$. 

The division into the Hamiltonian part $-i[H(t),\rho]$ and a dissipative part $\mc D_t[\rho]$ is unique if the sums run over an orthogonal operator basis $F_a$ [$\tr(F_a^\dagger F_b)\propto\delta_{ab}$]. The Kossakowski matrix $\Gamma$ can be diagonalized as $\Gamma_{ab}(t)=\sum_ku_{ak}(t)\gamma_k(t)u_{bk}^*(t)$, yielding the familiar equivalent form for the dissipator
$\mathcal D_t[\rho]=\sum_{k} \gamma_k(t)\left[L_k(t) \rho L^\dagger_k(t)-\frac{1}{2}\{L_k^\dagger(t)L_k(t),\rho\}\right],$
with $\gamma_k(t)\geq 0$, and with jump operators $L_k(t)=\sum_a u_{ak}(t)F_a$. The operator $\mc L_t$, at a fixed $t$, is said to be in Lindblad form if the $\gamma_k(t)$ are nonnegative. 

The assumption of Markovianity by divisibility is ubiquitous in the modeling of gate-based quantum computing. At the circuit level, the total evolution of the qubits is discretized into time steps, each corresponding to a circuit layer. In general, the state of the system after each time step evolves according to a complex quantum stochastic process~\cite{Modi2021prx}. The assumption of Markovianity by divisibility allows us to approximate this dynamics as an independent sequence of quantum channels. At the gate level, the assumption of Markovianity by divisibility leads to a description in terms of the time-dependent GKSL equation [Eq.~\eqref{eq:generator}]. In this case, 
$H(t)$ represents the control Hamiltonian that implements the intended unitary operation, while the dissipator captures noise processes such as relaxation and dephasing.

\paragraph{\normalfont\small\textbf{Channel semigroup Markovianity.}} When the generator appearing in Eq.~\eqref{eq:generator} is time-independent, $\mc L_t=\mc L$, the resulting family of channels forms a one-parameter semigroup, $\mathscr E=\{\mc E_t \mid t\geq 0\}$, with $\mc E_{t_2}\mc \circ \mc E_{t_1}=\mc E_{t_2+t_1}$. In this case, each channel $\mc E_t$ can be written as $\mc E_t=e^{t \mc L}$, with $\mc L$ the generator in Lindblad form. Motivated by this observation, Refs. \cite{wolf2008dividing,wolf2008assessing} introduced a definition of Markovianity as a property of a \emph{single} quantum channel, rather than of a family of channels (i.e., of the evolution or process itself). Specifically, a channel $\mathcal E$ is called Markovian if there exists a generator $\mathcal L$ in Lindblad form such that $\mathcal E=e^\mathcal L$. For clarity, we refer to this notion of Markovianity as \emph{channel semigroup Markovianity} (CSM). Given an arbitrary channel $\mathcal E$ one can test whether it is CSM by computing $\log\mathcal E$ and verifying that it can be brought to Lindblad form~\cite{wolf2008assessing}. Note that the parameterization in Eq.~\eqref{eq:PL} describes a Pauli channel that is CSM by construction.

\begin{figure}[t]
 \includegraphics[width=\columnwidth]{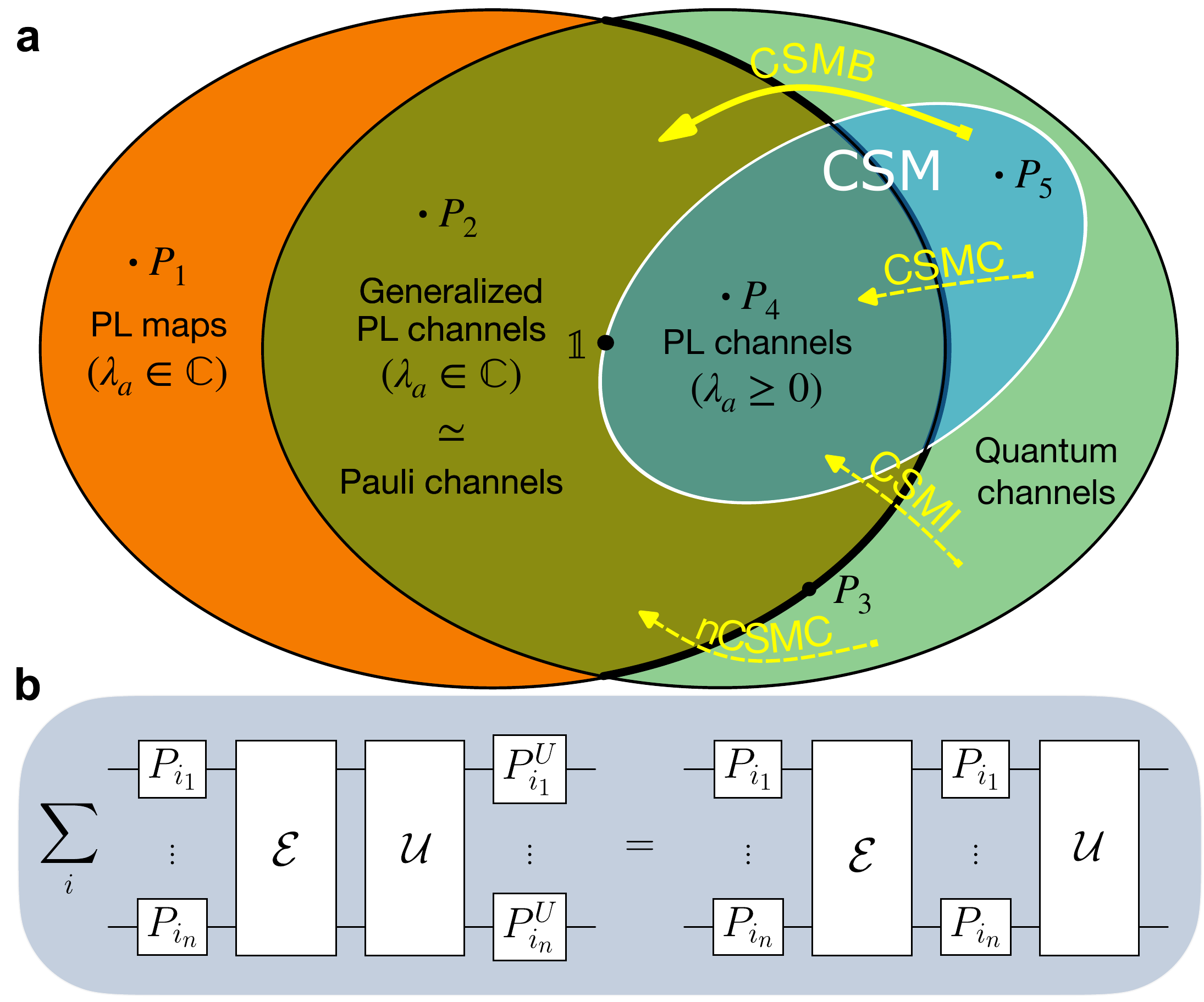}
 \caption{\label{fig:cartoon} \textbf{(a)} The relations between the maps in this paper. The Pauli-Lindblad (PL) maps (orange, $P_1$) are those maps of the form of \eq{PL} with $\la_a\in \mathbb C$. The generalized PL channels are defined as those PL maps that are also quantum channels (olive green, $P_2$). This set is essentially equal to the set of Pauli channels (the latter is the closure of the former, with limit points depicted as the boundary, $P_3$). The channel semigroup Markovian channels (teal, CSM) contain channels that are Pauli channels ($P_4$) and non-Pauli channels ($P_5$). The Pauli-Lindblad channels are essentially the CSM channels that are Pauli channels ($P_4$), forming a strict subset of the generalized PL channels. All possible nontrivial effects of Pauli twirling a quantum channel are indicated by the yellow arrows. Pauli twirling can break (CSMB), conserve (CSMC), or instate (CSMI) channel semigroup Markovianity.  It can also conserve the non-CSM property of a channel (nCSMC).
 \textbf{(b)} The noise channel of a noisy Clifford gate $\mc U\circ \mc E$, with $\mc U(\rho)=U\rho\,U^\dagger$ the noiseless Clifford, can be Pauli twirled by inserting a random Pauli word $P_a$ before, and the conjugated Pauli word $P^U_a=U P_a U^\dagger$ after the noisy gate.}
 \end{figure}

\paragraph{\normalfont\small\textbf{Pauli channels.}} 
A general quantum channel $\mc E$ admits a Kraus representation of the form $\mc E(\rho)=\sum_i K_i^\nodagger \rho\, K_i^\dagger$, where the Kraus operators $K_i$ satisfy the completeness relation $\sum_i K_i^\dagger K_i^\nodagger=\id$ \cite{Nielsen_Chuang_2010}.
The $n$-qubit Pauli words $P_a\in\{\id,X,Y,Z\}^{\otimes n}$  form an orthogonal operator basis, satisfying $\mathrm{tr}(P_aP_b)=2^n\delta_{ab}$. Each Kraus operator can be expanded in this basis as $K_i=\sum_a\ka_{ia} P_a$, so that we can write any channel as $\mc E(\rho)=\sum_{ab}p_{ab}P_a\rho P_b$, with $p_{ab}=\sum_i \ka_{ia}^{\phantom *}\ka_{ib}^*$. We say that the matrix $p_{ab}$ is the Pauli-basis representation of the channel.
In particular, a channel is called a Pauli channel if its Pauli-basis representation is diagonal, that is, if
$\mc E(\rho)=\sum_{a} p_a P_a \rho P_a$,
where $p_a$ is a probability distribution. Therefore, a Pauli channel is fully characterized by the distribution $p_a$.

An alternative representation of a general channel $\mc E$ is given by its transfer matrix
\begin{equation}\label{eq:transfer}
T_{ab}:=\frac{1}{2^n}\tr[P_a\mc E(P_b)].
\end{equation}
The diagonal elements of $T$ are referred to as the Pauli eigenvalues $f_a$, 
\begin{equation}\label{eq:peig}
f_a:=\frac{1}{2^n}\tr[P_a\mc E(P_a)].
\end{equation}
As with the matrix $p_{ab}$, Pauli channels have diagonal transfer matrices, thus the eigenvalues $f_a$ also fully characterize any Pauli channel. The Pauli eigenvalues $f_a$ and the probabilities $p_a$ are related by the Walsh-Hadamard transform
\begin{equation}
    f_a=\sum_k(-1)^{\langle a,k \rangle}p_k \xrightleftharpoons[]{} p_a=\frac{1}{4^n}\sum_k (-1)^{\langle a,k \rangle}f_k,\label{eq:ftop}
\end{equation}
where the symplectic inner product $\langle a,c \rangle$ equals $0$ if $P_a$ and
$P_c$ commute, and $1$ if they anticommute \cite{flammia2020efficient}.

Any quantum channel $\mc E$ can be converted into a Pauli channel $\mc E^{\mrm P}$ by Pauli twirling, 
which is performed by randomly injecting a Pauli word before, and the same random Pauli word after the channel,
\begin{equation}\label{eq:twirling}
\mc E^{\mrm P}(\rho):=\frac{1}{4^n}\sum_{a,i} P_a  K_i  P_a\rho P_a K_i^\dagger P_a.
\end{equation}
Using again the Pauli basis expansion of the Kraus operators, one finds that $\mc E^{\mrm P}$ is indeed a Pauli channel with $p_a=\sum_k|\ka_{ka}|^2$ \cite{flammia2020efficient}. In other words, Pauli twirling removes the off-diagonal elements of the Pauli-basis representation $p_{ab}$. Equivalently, in the transfer-matrix picture, Pauli twirling eliminates the off-diagonal elements of $T_{ab}$, yielding $T^{\mrm P}=\diag(f)$. 

\section{\label{sec:level2}Results}
\subsection{Generalized Pauli-Lindblad channels}
We first present  mathematical results, summarized in \fig{cartoon}a, on the relation between Pauli channels and channel semigroup Markovianity. We define \emph{PL maps} as those maps of the form of \eq{PL} with $\la_a\in \mathbb C$. It is straightforward to show that there exist PL maps that are not quantum channels. We therefore define the \emph{generalized PL channels} as those PL maps that are also quantum channels. As we show below, the set of generalized PL channels is essentially equal to the set of Pauli channels.

In terms of the PL parameters $\lambda_a$, the Pauli eigenvalues are given by
\begin{equation}\label{eq:ltof}
f_a=\exp\left(-2 \sum_{k}\la_k\langle a,k\rangle\right),
\end{equation}
as was found in Ref.~\cite{berg2023probabilistic}. We find that this expression can be inverted
\begin{equation}\label{eq:ftol}
\la_a=\frac{1}{4^n}\sum_{k\neq 0}(-1)^{\sp{a}{k}}\ln(f_k),
\end{equation}
with the definition $P_0=\id$. 
From Eq.~\eqref{eq:ftol}, it is clear that the PL parameters $\la_a$ are ill-defined for Pauli channels for which there is an $a$ such that $f_a=0$. Nevertheless, any such channel can be approached arbitrarily closely by a PL channel. Thus, we have the following result.

\begin{proposition}
The set of Pauli channels is the closure of the set of generalized PL channels.
\end{proposition}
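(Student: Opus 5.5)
We prove two inclusions, working throughout in the Pauli-eigenvalue picture. Fix the topology on channels as the one induced by any norm on the finite-dimensional space of superoperators. For Pauli channels this is the same as convergence of the vector $f=(f_a)$, or equivalently of $p=(p_a)$, since the two are related by the invertible linear map \eq{ftop}.

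\textbf{Generalized PL channels are Pauli channels, and the Pauli channels form a closed set.} A PL map $e^{\mc L}$ with $\mc L$ as in \eq{PL} and complex $\la_a$ is diagonal in the Pauli transfer-matrix basis. Each term $\rho\mapsto P_a\rho P_a-\rho$ maps $P_b$ to $(\pm1-1)P_b$, so $\mc L$ and hence $e^{\mc L}$ are diagonal, with eigenvalues given by \eq{ltof}. Thus $e^{\mc L}(\rho)=\sum_a p_a P_a\rho P_a$, with $p$ obtained from $f$ via \eq{ftop}. If $e^{\mc L}$ is moreover a channel, complete positivity forces $p_a\ge0$ (the $p_a$ are the eigenvalues of the Choi matrix in the Bell basis), and trace preservation forces $\sum_a p_a=1$. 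So every generalized PL channel is a Pauli channel. The Pauli channels form the image of the probability simplex under a continuous linear map, which is compact and hence closed. Therefore the closure of the generalized PL channels is contained in the set of Pauli channels.

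\textbf{Every Pauli channel is a limit of generalized PL channels.} Let $\mc E$ be a Pauli channel with distribution $p$. Consider the convex combination $\mc E_\epsilon=(1-\epsilon)\mc E+\epsilon\,\mc E_{\mrm{dep}}$ with the completely depolarizing channel, whose distribution is $p_a=4^{-n}$ for all $a$ and whose Pauli eigenvalues are $f_0=1$ and $f_a=0$ for $a\neq0$. Then $\mc E_\epsilon$ is a Pauli channel with $f_0=1$ and $f_a(\epsilon)=(1-\epsilon)f_a$ for $a\neq0$. This does not remove zeros, so instead we mix with the identity channel, $\mc E_\epsilon=(1-\epsilon)\mc E+\epsilon\,\id$, giving $f_a(\epsilon)=(1-\epsilon)f_a+\epsilon$.

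The $f_a$ are real and lie in $[-1,1]$, since they are the $\pm1$-weighted sums of probabilities in \eq{ftop}. Hence $f_a(\epsilon)\ne0$ except when $f_a=-(\epsilon)/(1-\epsilon)$, which happens for at most finitely many $\epsilon$. Choose $\epsilon\to0$ avoiding these values. Then all $f_k(\epsilon)$ are nonzero, and we pick a branch of $\ln f_k(\epsilon)$, namely $\ln|f_k|+i\pi$ when $f_k<0$. We define $\la_a$ by \eq{ftol}.

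The key step, and the main obstacle, is to check that these $\la_a$ actually reproduce the $f_k(\epsilon)$ through \eq{ltof}. This requires the Walsh–Hadamard-type inversion identity
\begin{equation*}
-2\sum_k\langle a,k\rangle\la_k=\ln f_a .
\end{equation*}
Write $\langle a,k\rangle=\tfrac12\bigl(1-(-1)^{\langle a,k\rangle}\bigr)$ and use the character orthogonality
\begin{equation*}
\sum_k(-1)^{\langle a,k\rangle+\langle k,c\rangle}=4^n\delta_{ac}.
\end{equation*}
Together with the exclusion of $k=0$ in \eq{ftol}, which uses $\ln f_0=0$, this reduces the left-hand side to $\ln f_a$. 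Any branch choice changes $\ln f_a$ only by multiples of $2\pi i$, which leaves $f_a=e^{\ln f_a}$ unchanged.

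Consequently $e^{\mc L}=\mc E_\epsilon$. This PL map is a quantum channel, because it is a convex combination of channels, so it is a generalized PL channel. Finally, $\mc E_\epsilon\to\mc E$ as $\epsilon\to0$, which gives the reverse inclusion and proves the Proposition.
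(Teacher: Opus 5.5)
Your proof is correct and follows the paper's route: invert \eq{ltof} via \eq{ftol} for Pauli channels with all $f_a\neq0$, and approximate channels with a vanishing Pauli eigenvalue, which you do concretely by mixing with the identity channel. You also supply what the paper leaves implicit: that generalized PL channels are Pauli channels, that the Pauli channels form a closed set, and the character-orthogonality check that \eq{ftol} inverts \eq{ltof} for any branch choice. You should delete the abandoned attempt to mix with the depolarizing channel.
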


We remark that for the identity channel, the Pauli eigenvalues are $f=(1,\ldots,1)$. As a result, Pauli channels with at least one vanishing Pauli eigenvalue $f_a=0$ represent strong-noise quantum channels and are not relevant for the characterization of the noise of high-fidelity quantum gates. Moreover, the eigenvalues satisfy $f_a\in[-1,1]$. Therefore, it is even possible for the $\la_a$ to be complex, which, again, occurs only for strong-noise quantum channels. Consequently, Eq.~\eqref{eq:ftol} also shows that the PL parameters $\la_a$ are not unique for these strong-noise channels, due to the possibility of choosing different branch cuts for the logarithm. This redundancy can be removed by consistently taking the principal branch of the logarithm, thus producing a unique representation in terms of $\lambda_a$ for each Pauli channel. Equations \eqref{eq:ltof} and \eqref{eq:ftol} then establish a bijection between the PL parameters $\la_a$ and Pauli eigenvalues $f_a$ ($f_a\neq 0$).

Given the bijection between the PL parameters and the Pauli eigenvalues, we have the following proposition, that we state due to its importance despite its simplicity. The proposition remains valid for strong-noise channels with possible negative Pauli eigenvalues, as long as $\mc P$ has nonzero Pauli eigenvalues; otherwise the PL parameters are ill-defined.   
\begin{proposition}
A Pauli channel $\mc P$ has real and nonnegative PL parameters if and only if $\mc P$ is channel semigroup Markovian (CSM). 
\end{proposition}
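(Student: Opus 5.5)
The direction ``nonnegative PL parameters $\Rightarrow$ CSM'' I would settle by inspection. If all $\la_a$ are real and $\la_a\ge 0$, then \eq{PL} already gives $\mc P=e^{\mc L}$ with $\mc L(\rho)=\sum_a \la_a(P_a\rho P_a-\rho)$. Since $P_a^\dagger P_a=\id$, we have $P_a\rho P_a-\rho = P_a\rho P_a^\dagger-\tfrac12\{P_a^\dagger P_a,\rho\}$. So $\mc L$ is of the form of Eq.~\eqref{eq:dissipator} with $H=0$, jump operators $F_a=P_a$, and diagonal Kossakowski matrix $\Gamma=\diag(\la_a)\ge 0$. Hence $\mc P$ is CSM.

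For the converse I would work in the transfer-matrix picture. Assume $\mc P=e^{\mc L}$ with $\mc L$ in Lindblad form, and let $T_{\mc L}$ be its transfer matrix in the Pauli basis. The plan has three steps.

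\textbf{Step 1: diagonal of $T_{\mc L}$.} I compute the diagonal entries $(T_{\mc L})_{aa}=2^{-n}\tr[P_a\mc L(P_a)]$ for a general GKSL generator written in the Pauli basis $F_k=P_k$.
\begin{itemize}
\item The Hamiltonian part contributes $\tr(P_a[H,P_a])=0$.
\item For the off-diagonal Kossakowski entries $k\neq l$, the terms $\tr(P_aP_kP_aP_l)=\pm\tr(P_kP_l)$ and $\tr(P_aP_lP_kP_a)=\tr(P_lP_k)$ vanish.
\item The diagonal entries give $(T_{\mc L})_{aa}=-2\sum_k\Gamma_{kk}\sp{a}{k}$.
\end{itemize}
This has exactly the form of the exponent in Eq.~\eqref{eq:ltof}, with $\Gamma_{kk}\ge 0$ playing the role of the $\la_k$. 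The linear relation between the $\la_k$ and the exponents is inverted by the Walsh--Hadamard formula behind Eq.~\eqref{eq:ftol}.

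\textbf{Step 2: identify the diagonal with $\ln f_a$.} It then suffices to show $(T_{\mc L})_{aa}=\ln f_a$ with the principal branch. The PL parameters of $\mc P$ would then be $\la_k=\Gamma_{kk}$, which are real and nonnegative. Because $\mc L$ preserves Hermiticity, $T_{\mc L}$ is real. Because $\mc L$ commutes with $e^{\mc L}=\diag(f)$, $T_{\mc L}$ is block diagonal on the eigenspaces of $\diag(f)$.
\begin{itemize}
\item If all $f_a$ are distinct, $T_{\mc L}$ is diagonal. Then $e^{(T_{\mc L})_{aa}}=f_a$ with $(T_{\mc L})_{aa}$ real, which forces $f_a>0$ and $(T_{\mc L})_{aa}=\ln f_a$, and we are done.
\item In a degenerate block with eigenvalue $f$, the block $B$ satisfies $e^B=f\id$. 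So $B=\ln|f|\,\id+N$, where $N$ is real and diagonalizable with eigenvalues in $i\pi\mathbb Z$.
\end{itemize}

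\textbf{Step 3: the degenerate blocks (main obstacle).} Here $N$ can in principle have nonzero Pauli-diagonal entries, even though $\tr N=0$. The identification $(T_{\mc L})_{aa}=\ln f_a$ then fails entrywise, and for $f<0$ the principal branch is not even real. My first attempt would be a perturbative argument:
\begin{itemize}
\item Perturb $\mc P$ within the Pauli channels to lift the degeneracies.
\item Perturb $\mc L$ correspondingly, by adding a small Pauli-diagonal Lindblad term.
\item Use continuity of Eq.~\eqref{eq:ftol} in $f$ to pass to the limit, giving $\la_a\ge0$ for the original channel.
\end{itemize}
Failing that, I would appeal to the Wolf--Cirac criterion~\cite{wolf2008assessing}: among all logarithms of $\mc P$, check whether the principal one is the relevant Lindblad candidate, using the Pauli twirl of $\mc L$ (which is again of Lindblad form) as a comparison generator.
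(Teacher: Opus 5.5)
\textbf{There is a genuine gap: Step 3 is not an argument, and neither of your two remedies closes it.}

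Your forward direction and Steps 1--2 are correct. They already go further than the paper's converse. The paper takes the Lindblad generator of $\mc P$ to be the PL generator of \eq{PL} and invokes uniqueness of $\la$. That is justified only when every real logarithm of $\mc P$ is Pauli-diagonal, which is essentially your case of pairwise distinct $f_a$.

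The obstruction in degenerate blocks is not hypothetical. For one qubit, take $\mc L=-i[\tfrac{\omega}{2}Z,\cdot\,]+\sum_k\Gamma_{kk}(P_k\cdot P_k-\cdot)$ with $\Gamma_{xx}\neq\Gamma_{yy}$ and $\omega^2=4\pi^2+(\Gamma_{xx}-\Gamma_{yy})^2$. This $\mc L$ is in Lindblad form, and $e^{\mc L}$ is a Pauli channel with $f_x=f_y$, while $(T_{\mc L})_{xx}\neq(T_{\mc L})_{yy}$.

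\emph{Perturbative route.} A Pauli-diagonal $\mc K$ commutes with such an $\mc L$ only if it takes equal values on indices coupled by $T_{\mc L}$. So either it does not lift the degeneracy, or nothing guarantees that $e^{\mc L+\epsilon\mc K}$ is a Pauli channel. Producing CSM Pauli channels with distinct eigenvalues that converge to $\mc P$ is essentially the claim itself.

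\emph{Wolf--Cirac route.} The Pauli twirl of $\mc L$ is of Lindblad form but is not a logarithm of $\mc P$. In the example above its exponential has $e^{(T_{\mc L})_{xx}}\neq e^{(T_{\mc L})_{yy}}$, whereas $f_x=f_y$.

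\textbf{For negative eigenvalues no argument can close Step 3, because the statement is false there.} The channel $\mc P(\rho)=Z\rho Z$ equals $e^{\mc L}$ with $\mc L=-i[\tfrac{\pi}{2}Z,\cdot\,]$, so it is CSM. Yet $f=(1,-1,-1,1)$ gives $\la_z=-i\pi/2$ on the principal branch. No branch yields real $\la$, since real $\la$ force $f_a>0$ by \eq{ltof}. This is exactly a degenerate negative block with $N\neq0$. It also contradicts the paper's remark that the proposition survives negative Pauli eigenvalues. So you must restrict to $f_a>0$.

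\textbf{For $f_a>0$, what your block structure gives is an averaging identity.} Since $\tr N=0$ on every block, $\ln f_k$ equals the average of $d_k=(T_{\mc L})_{kk}=-2\sum_j\Gamma_{jj}\sp{k}{j}$ over the level set of $f$ containing $k$.
\begin{itemize}
\item For $n=1$ this finishes the proof. Blocks containing $P_0$ force $d=0$ on them. The remaining partitions are $\{x,y\},\{z\}$ (up to permutation) and $\{x,y,z\}$. They give $\la_x=\la_y=\tfrac12(\Gamma_{xx}+\Gamma_{yy})$ and $\la_z=\Gamma_{zz}$, respectively $\la_a=\tfrac13\sum_k\Gamma_{kk}$, all nonnegative.
\item For $n\ge2$, averaging over an arbitrary partition does not preserve nonnegativity. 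Take a single rate $\Gamma_{aa}=1$ with $P_a=Z\otimes\id$ and average its $d$ over $\{X\otimes\id,\id\otimes X\}$. This gives $\la_{Y\otimes\id}=-1/8$.
\end{itemize}
You therefore still need a structural argument restricting which blocks, and which $N$, a genuine Lindblad logarithm of a Pauli channel can produce. The proposal contains none.
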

\begin{proof}
If a Pauli channel $\mc P$ has nonnegative PL parameters $\la$, then  $\mc P=e^{\mc L}$, where $\mc L$ is given in \eq{PL}, trivially in Lindblad form. Therefore, $\mc P$ is CSM. For the converse, assume that a Pauli channel $\mc P=e^\mc{L}$ is CSM with $\mc L$ as in \eq{PL}. Suppose that $\mc P$ has a PL parameter $\la_a$ that is not real and nonnegative. Then it cannot be brought to Lindblad form because $\la$ is unique and $\mc L$ is already in diagonal form with negative or complex eigenvalues $\la$, which contradicts the assumption that $\mc P$ is CSM.    
\end{proof}

Using Eqs.~\eqref{eq:ftop} and \eqref{eq:ftol} it is now possible to determine whether an $n$-qubit Pauli channel characterized by either its probability distribution $p_a$ or by its eigenvalues $f_a$ is CSM by checking the signs of the parameters $\lambda_a$. For example, for a single-qubit Pauli channel, the criterion according to Eq.~\eqref{eq:ftol} is $f_j\geq f_kf_l$ for all permutations $(j,k,l)$ of $(x,y,z)$. This condition for the single-qubit Pauli channel case was also reported in Ref.~\cite{Davalos2019}; our result generalizes this to the arbitrary $n$-qubit case.

\subsection{Gate frame}
Consider the noisy implementation $\Phi$ of a unitary quantum gate $\mc U$, with $\mc U(\rho)=U\rho\, U^\dagger$, where $U$ is the desired unitary. It is always possible to write 
\begin{equation}
\Phi=\mc U \circ \mc E, 
\end{equation}
which defines the error channel $\mc E$ of the noisy gate. If $U$ is a Clifford gate, so that $P^U_a U=UP_a$ for some Pauli word $P_{a}^U$, the noise channel $\mc E$ can be Pauli-twirled experimentally by picking a Pauli word $P_a$ uniformly at random and inserting the Pauli channel $\mc P_a(\rho)=P_a\rho P_a$ before and the channel $\mc P_a^U(\rho )=P_a^U
\rho P_a^U$ after the noisy gate \cite{Wallman2016rc},
\begin{equation}
\frac{1}{4^n}\sum_a\mc P_a^U \circ \Phi \circ \mc P_a=\mc U\circ \mc E^{\mrm P}, 
\end{equation}
as illustrated in Fig.~\ref{fig:cartoon}b. We wish to characterize the Pauli-twirled noise $\mc E^{\mrm P}$ of a quantum gate as a (generalized) PL channel.

Consider a standard gate-based quantum computer, where a quantum gate, taking time $t_g$, is described by a process that is Markovian by divisibility, i.e., where time evolution is described by~\eqref{eq:generator}. To perform a nontrivial quantum gate, the Hamiltonian $H(t)$ can be time-dependent. Moving to the frame of the gate, defined by $\tilde \rho(t)=U^\dagger(t) \rho(t) U(t)$, with $\rho(t)$ the state in the original frame, and $\dd U(t)/\dd t=-iH(t)U(t)$, we have
\begin{equation}
\Phi_t=\mc U_t \circ \mc E_t,
\end{equation}
with $\mathcal U_t(\rho)=U(t)\rho\, U^\dagger(t)$ the unitary evolution up to time $t$, and $\mc E_t$ the noise up to time $t$. The noisy quantum gate, with possible coherent and incoherent errors, is described by the channel $\Phi_{t_g}=\mc U_{t_g}\circ \mc E_{t_g}\equiv\Phi$. The noise channel $\mc E_t$ is generated by $\tilde{\mc L}_t=-i[\tilde H'(t),\,\cdot\,]+\tilde{\mc D}_t$. Here, $\tilde H'(t)$ describes any coherent control errors in the gate frame, and  $\tilde {\mc D}_t$ is the gate-frame dissipator, which is obtained by transforming the Kossakowski matrix,
\begin{equation}
\Gamma_{ab}(t)\rightarrow \tilde \Gamma_{ab}(t)=\sum_{c,d}T_{ac}^{U}(t)\Gamma_{cd}(t)T_{bd}^{U}(t),
\end{equation}
where $T_{ab}^{U}(t)$ is the transfer matrix of $\mathcal U_t$.

The Kossakowski matrix in the gate frame is thus generally time-dependent, even if it is time-independent in the original frame. Nevertheless, it remains positive semidefinite, so that $\tilde{\mc L}_t$ generates a family that is Markovian by divisibility. Because of the time dependence, this does not generally imply that $\mc E=\mc E_{t_g}$ (which is generated by $\tilde{\mc L_t}$) is CSM, nor does it imply that $\mc E^{\mrm P}$ is CSM. In the following, we give examples showing that even if $\mc E$ is CSM, Pauli twirling may break this property (CSMB). We additionally show that Pauli twirling may conserve (CSMC) and instate (CSMI) the CSM property of $\mc E$, and may even conserve non-CSM (nCSMC).  

\subsection{Hadamard dephasing}

Consider a qubit channel $\mc E_t$ generated starting from time $t=0$ by a purely dissipative generator [$H(t)=0$], with a single jump operator being the Hadamard matrix,
\begin{equation}
L_\varphi=(X+Z)/\sqrt{2},
\end{equation}
and associated constant dephasing rate $\gamma_\varphi>0$. (In this example, the gate frame equals the original frame since $H(t)=0$). Since the jump operator $L_\varphi$ is time-independent, the channel $\mc E_t=e^{t \mc L}=e^{t \mc D}$ is manifestly CSM for any $t\geq 0$. Computing the Pauli eigenvalues $f_a$ and using Eq.~\eqref{eq:ftol}, we find that the PL parameters of $\mc E_t^{\mrm P}$ are
\begin{equation}
\begin{aligned}
\la_x(t)&=\la_z(t)=\frac{\gamma_\varphi t}{2},\\
\la_y(t)&=-\frac{1}{2}\ln[\cosh(\gamma_\varphi t)].
\end{aligned}
\end{equation}
We note that $\lambda(t)$ should be interpreted as specifying the value of the PL parameter for a channel of duration $t$.
For any $\gamma_\varphi t\geq 0$, the PL parameter $\la_y(t)$ is negative. Therefore, Pauli twirling may break channel semigroup Markovianity (CSMB). Given a PL qubit channel with two equal PL parameters with value $\ell$, the third must be at least $-\frac{1}{2}\ln[\cosh(2\ell)]$. This is shown by setting two PL parameters to $\ell$ (e.g. $\la_x=\la_z=\ell$), and  using Eqs.~\eqref{eq:ftop} and \eqref{eq:ltof} to find the $\la_y$ for which $p_a$ remains a probability distribution. Hadamard dephasing saturates this bound.

\subsection{Hadamard dephasing and relaxation}
We now extend Hadamard dephasing with a relaxation process. Hadamard dephasing can be viewed as a rotated version of standard dephasing because $L_\varphi=R_y(\pi/4)ZR_y(-\pi/4)$, where $R_y(\vartheta)$ denotes a rotation along the $y$-axis by angle $\vartheta$. Analogously, we  define a rotated relaxation process with the jump operator $L=R_y(\pi/4)\sigma_-R_y(-\pi/4)$, where $\sigma_-$ is the qubit lowering operator, and we denote the corresponding relaxation rate by $\gamma$. Considering a purely dissipative generator with jump operators $L_\varphi$ and $L$, we obtain the channel $\mc E_t$. Computing the Pauli eigenvalues and using \eq{ftol}, we find that the Pauli-twirled channel $(\mc E_t)^{\mrm P}$ is parameterized by 
\begin{align}
\la_x(t)&=\la_z(t)=\frac{\gamma_\varphi t}{2} +\frac{\gamma t}{8},\\
\la_y(t)&=\frac{\gamma t}{4}-\frac{1}{2}\ln\left[\cosh\left(\gamma_\varphi t-\frac{\gamma t}{4}\right)\right].\label{eq:had_lay}
\end{align}
This shows that relaxation competes with dephasing in making $\lambda_y$ negative, resulting in parameter regimes for which $\la_y\geq 0$ and parameter regimes for which $\la_y< 0$. This illustrates that Pauli twirling can also conserve the CSM property of a (non-Pauli) quantum channel (CSMC).

We expand the right-hand side of \eq{had_lay} around $\gamma_\varphi=\gamma=0$ to second order in $\gamma_\varphi$ and $\gamma$. Setting the result to zero, solving for $\gamma t$, and re-expanding to second order in $\gamma_\varphi t$, we find that the condition for $\lambda_y(t)<0$ is
\begin{equation}
\gamma t \lesssim (\gamma_\varphi t)^2.
\end{equation}
The relaxation and dephasing rates $\gamma$ and $\gamma_\varphi$ are related to the relaxation and (Ramsey) dephasing times $T_1$ and $T_2^*$ by $\gamma t=t/T_1$, $\gamma_\varphi t=t/T_2^*-t/(2T_1)$.
Similar steps after first making these substitutions in  \eq{had_lay} lead to
\begin{equation}
    \frac{t}{T_1}\lesssim\left(\frac{t}{T_2^*}\right)^2.
\end{equation}
Interestingly, if the noise is insufficiently biased towards dephasing, this can be compensated by longer idling times.

Although we have thus far introduced Hadamard dephasing and relaxation as a toy model to show negative PL parameters, we stress that it can be realized experimentally under realistic conditions nevertheless. Considering a qubit that undergoes Markovian dephasing and relaxation, the tilted relaxation and dephasing processes are implemented by applying a $\pi/4$ pulse to the qubit, allowing it to idle for a duration $t$, and then rotating it back with a $-\pi/4$ (or $7\pi/4$) pulse. 
Given any nonzero relaxation rate, one can explore both sides of the parameter regime by sweeping the idling time $t$ and observing the sign change of $\lambda_y$. This gives an exciting opportunity to experimentally observe Pauli-twirling induced non-Markovianity. 

Though experimentally feasible, Hadamard dephasing and relaxation is designed as an illustrative example. Since the unitary $\mc U$ involved is the identity, it does not represent a relevant quantum gate. We therefore proceed by considering a nontrivial quantum gate.

\subsection{The $\sqrt{X}$ gate}
Consider a qubit subjected to a circularly polarized resonant pulse with drive strength $A$ in the presence of Markovian relaxation and dephasing. In the frame rotating at the drive frequency, the qubit dynamics is governed by~\eq{generator}, where the Hamiltonian $H=AX/2$ is time-independent. The same rotating-frame Hamiltonian is obtained if the pulse is linearly polarized under the rotating wave approximation \cite{Scully_Zubairy_1997}. The jump operators consist of $Z$, with associated time-independent dephasing rate $\gamma_\varphi \ge 0$, and the qubit lowering operator $\sigma_-=\ket{0}\!\bra{1}$, with associated time-independent relaxation rate $\gamma \ge 0$. The closed evolution, i.e., when $\gamma=\gamma_\varphi=0$, describes a rotation along the $x$-axis, $R_x(\vartheta)$, with $\vartheta=At$ being the rotation angle. The $\sqrt{X}$ gate is implemented at $t_g=\pi/(2A)$. 

In the gate frame, the time evolution is generated by $\tilde{\mc L}_t={\tilde {\mc D_t}}$, where the jump operators are  $\tilde \si_-(t)=e^{iH_\mathrm{}t}\si_-e^{-iH_\mathrm{}t}$ and $\tilde Z(t)=e^{iH_\mathrm{}t}Ze^{-iH_\mathrm{}t}$, and the decoherence rates are unchanged.
The generated family of noise channels is therefore manifestly Markovian by divisibility. However, due to the time-dependent jump operators, $\mc E_{t_g}$ is generally not CSM. Perhaps surprisingly, using the methods from Ref.~\cite{wolf2008assessing}, we find numerically that for certain parameter regions of $\gamma_\varphi t_g, \gamma t_g$, including $0\leq \gamma_\varphi t_g\leq2,0\leq \gamma t_g\leq 2$, the noise channel $\mc E_{t_g}$ is CSM nonetheless (see \fig{contours}).

\begin{figure}[t]
 \includegraphics[width=\columnwidth]{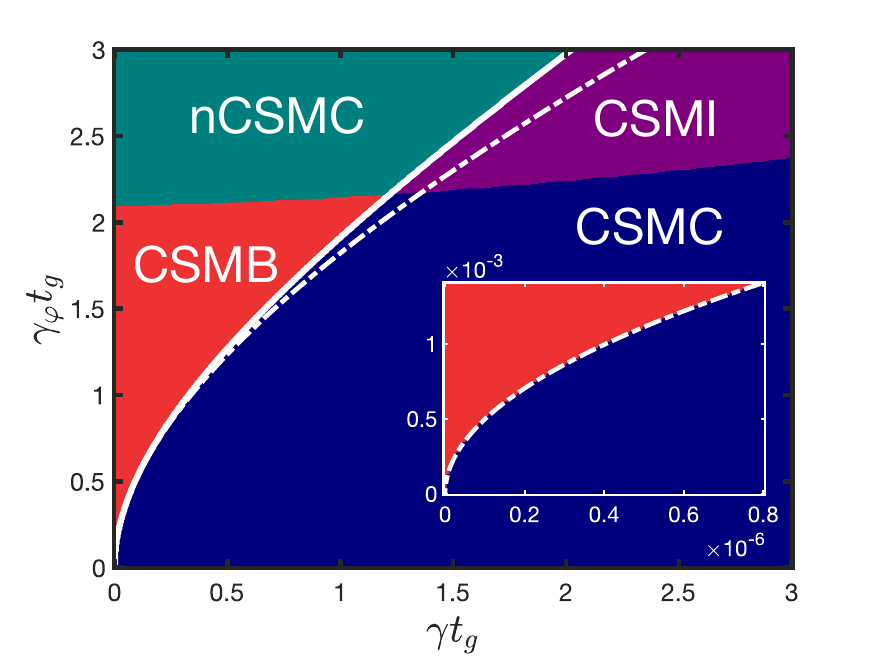}
 \caption{\label{fig:contours} Transition diagram of channel semigroup Markovianity for the error channel of a $\sqrt{X}$ gate due to Pauli twirling. The two-dimensional parameter space $(\gamma t_g,\gamma_\varphi t_g)$ is divided into four regions corresponding to the effects of Pauli twirling on channel semigroup Markovianity. 
 In the region labeled as CSMB (red) the initial CSM property is broken by Pauli twirling and the resulting channel can only be described by a generalized PL channel with a negative PL parameter. In region CSMC (dark blue) CSM is conserved and the resulting channel has only nonnegative PL parameters. At extreme dephasing rates ($\gamma_\varphi t_g\gtrsim2$) the error channel is non-CSM; Pauli twirling instates Markovianity in region CSMI (purple) and conserves non-Markovianity in region nCSMC (teal). The solid white line is the boundary separating the regions with different channel semigroup Markovianity after Pauli twirling, the dashed white line depicts the implicit curve $\lambda_x(t_g)=0$ [Eq.~\eqref{eq:sqrtx_lax}], the second-order approximation of this boundary.
 Most importantly, region CSMB contains  experimentally relevant scales, shown in the inset, with excellent agreement between the numerical and analytical second-order boundary.}  
 \end{figure}

We calculate the Pauli eigenvalues $f_a$ and the corresponding PL parameters $\la_a$ for the Pauli-twirled error channel $\mc E_{t_g}^{\mrm P}$ utilizing transfer matrix representations and a cumulant expansion (see Methods). To second order [i.e., also ignoring terms proportional to $\gamma t_g(\gamma_\varphi t_g)^2$, $(\gamma t_g)^2\gamma_\varphi t_g$, and $(\gamma t_g)^2(\gamma_\varphi t_g)^2$], we find
\begin{equation}
\begin{aligned}
\la_x(t_g)&\approx\frac{\gamma t_g}{4}-\frac{\sin^4\vartheta}{4\vartheta^2}\left(\gamma_\varphi t_g-\frac{\gamma t_g}{4}\right)^2,\\
\la_{y,z}(t_g)&\approx\frac{\gamma t_g}{8}\left(1\pm\frac{\sin2\vartheta}{2\vartheta}\right)+\frac{\gamma_\varphi t_g}{2}\left(1\mp\frac{\sin2\vartheta}{2\vartheta}\right).
\end{aligned}\label{eq:lambdagate}
\end{equation}
Using the scheme in \fig{cartoon}b to twirl the noise channel of a noisy gate, the above PL parameters can only be observed experimentally at $\vartheta=m \pi/2$, because only then the resulting gate is a Clifford gate. For the $\sqrt X$ gate specifically, $\vartheta=\pi/2$, in which case
\begin{align}\label{eq:sqrtx_lax}
    \la_x(t_g)&\approx\frac{\gamma t_g}{4}-\frac{1}{\pi^2}\left(\gamma_\varphi t_g-\frac{\gamma t_g}{4}\right)^2,\\
    \la_y(t_g)&=\la_z(t_g)=\frac{\gamma t_g}{8}+\frac{\gamma_\varphi t_g}{2}.
\end{align}
The expressions for $\la_y$ and $\la_z$ are exact; at $\vartheta=\pi/2$ the higher order terms drop. 

Based on \eq{sqrtx_lax},  computations as in the previous section result in $\la_x<0$ if
\begin{equation}
    \gamma t_g\lesssim\left(\frac{2}{\pi}\right)^2(\gamma_\varphi t_g)^2,\label{eq:criterion}
\end{equation}
or in terms of dephasing and relaxation times,
\begin{equation}
    \frac{t_g}{T_1}\lesssim\left(\frac{2}{\pi}\right)^2\left(\frac{t_g}{T_2^*}\right)^2.
\end{equation}
This is as for Hadamard relaxation and dephasing, but, at equal dehasing times, $\gamma t_g$ needs to be smaller by a factor $(2/\pi)^2\approx 0.4$. 

Thus, negative PL parameters may be observed for the noise channel of the  vanilla implementation of the $\sqrt{X}$ gate, provided that the noise is sufficiently biased towards dephasing. Platforms whose noise is strongly biased towards dehphasing include semiconductor spin-qubit, NV center, ion trap, and neutral atom platforms. As a concrete example, consider spin qubits in silicon, where typically $t_g\sim 100\,\mrm{ns}$, $T_1\gg1\,\mrm{ms}$, and $T_2^*\sim 1\text{-}10\,\mu\mrm{s}$~\cite{Silicon2022}. This gives $\frac{t_g}{T_1}\ll10^{-4}$ and $\left(\frac{2}{\pi}\right)^2\left(\frac{t_g}{T_2^*}\right)^2\sim 10^{-4}\text{-}10^{-2}$, showing that the condition on the noise bias can be met. If on any given platform the noise bias is insufficient, a possible remedy is pulse stretching, i.e., lowering the drive strength $A$ and thereby increasing $t_g$.

\section{\label{level3}Discussion}

\subsection{\label{sec:level3.1}
Impact on quantum error mitigation}

The goal of a quantum computation is often to determine the expectation value
$\langle \mathcal O \rangle$ of an observable $\mathcal O$ with respect to a
(pure) quantum state. A quantum computer applies a quantum circuit to prepare a noisy version of this state and then measures $\mathcal O$. Quantum error mitigation (QEM) aims to reduce the effects of noise through repeated circuit execution and classical coprocessing \cite{QEMreview2023}. Two prominent QEM techniques are probabilistic error cancellation (PEC) and zero-noise extrapolation (ZNE)~\cite{temme2017error,QEMreview2023}.

Following Refs.~\cite{vdBevidence2023,berg2023probabilistic}, when PL models are used in QEM, both PEC and ZNE proceed as follows. After each circuit layer, the map $(\mathcal E^{\mathrm P})^\beta$ is applied by injecting Pauli words into the circuit, where $\mathcal E^{\mathrm P}$ is the Pauli-twirled error channel of the layer. The map $(\mathcal E^{\mathrm P})^\beta$ has PL parameters $\lambda_a' = \beta \lambda_a$, with $\lambda_a$ the PL parameters of $\mathcal E^{\mathrm P}$. In Refs.~\cite{vdBevidence2023,berg2023probabilistic}, these parameters are assumed to be nonnegative. The injected Pauli words are sampled by writing $(\mathcal E^{\mathrm P})^\beta$ as
\begin{equation}
(\mathcal E^{\mathrm P})^\beta
= \bigcirc_a ( w_a \,\cdot + (1-w_a)\, P_a \cdot P_a ),
\end{equation}
where $w_a = (1 + e^{-2\lambda_a'})/2$. For each $a$, the Pauli operator
$P_a$ is chosen with probability $1 - w_a$. The product of all sampled $P_a$ yields a single Pauli word, which is then injected into the circuit. This procedure is repeated for each circuit layer and for each circuit execution.

In ZNE, the noise is amplified by choosing $\beta > 0$. The expectation values $\langle \mathcal O \rangle$ are measured for several values of $\beta$.
These data points are then extrapolated to the noiseless limit $\beta = -1$, where $(\mathcal E^{\mathrm P})^\beta$ exactly cancels $\mathcal E^{\mathrm P}$.

In PEC, $\beta = -1$ is chosen from the outset. This implies
$\lambda_a' < 0$ for all $a$, so that $1 - w_a$ can no longer be
interpreted as a probability. This issue can be remedied by
quasi-probabilistic sampling, as explained in Refs.~\cite{temme2017error,berg2023probabilistic}. In the terminology of this paper, $(\mathcal E^{\mathrm P})^{-1}$ remains a PL map.

The discussion of ZNE and PEC above assumes nonnegative PL parameters for $\mc E^{\mrm P}$. However, as we have shown in this paper, excluding negative PL parameters can lead to a fundamental mischaracterization $\check{\mathcal E}^{\mathrm P}$ of the true error channel $\mathcal E^{\mathrm P}$. In this case, an incorrect map $(\check{\mathcal E}^{\mathrm P})^\beta$ is applied. This results in an incorrect estimate of $\langle \mathcal O \rangle$ at $\beta = -1$ for both ZNE and PEC.

Nevertheless, ZNE and PEC based on PL models remain wholly possible when negative PL parameters are allowed. Quasi-probabilistic sampling can be straightforwardly extended to accommodate maps for which $0 \le w_a \le 1$ for some $a$, while $w_a > 1$ for others. This extension applies to both ZNE and PEC, since both types of coefficients may now occur in these QEM protocols.

\subsection{Conclusion}

We have shown that negative Pauli–Lindblad (PL) parameters are necessary for the faithful description of non-Markovian Pauli channels, where non-Markovianity is understood as the violation of the channel semigroup property. Importantly, such non-Markovian Pauli channels can arise even when the underlying physical noise model is Markovian (by divisibility), and even when the resulting noise channel is (channel-semigroup) Markovian, because Pauli twirling can break the (channel semigroup) Markovianity property. This effect occurs, for example, in the standard implementation of the $\sqrt{X}$ gate on dephasing-dominated platforms, and thus provides a concrete opportunity to experimentally observe Pauli-twirling-induced non-Markovianity in a standard implementation of a standard quantum gate. Allowing for negative PL parameters is essential in quantum error mitigation, as excluding this possibility may lead to incorrect noise characterization, and thereby incorrect error-mitigated results. 

Perhaps in hindsight, it is not so surprising that Pauli-twirling can break Markovianity (by the channel semigroup property); to Pauli-twirl a channel, a Pauli word $P_a$ is inserted before the channel for an $a$ drawn uniformly at random. The channel acts, and in the meantime, an auxiliary classical degree of freedom has to ``memorize'' $a$, so that $P_a$ can be injected after the channel has acted. This auxiliary degree of freedom lies outside the system. It is hence part of the environment, which is consequently not memoryless.

Two factors likely contribute to the fact that negative PL parameters have not been observed experimentally thus far. First, the possibility of negative PL parameters has been excluded a priori by assumption through a constraint in least-squares fitting procedures applied to experimental data because the negative PL parameters were thought to be unphysical \cite{vdBevidence2023}. Second, PL parameters have not yet been experimentally characterized in dephasing-dominated platforms. As such, this work also serves as practical guidance when characterizing error channels using PL models: when fitting PL parameters to experimental data, one must explicitly allow for the possibility of negative values, even if the physical noise is assumed to be Markovian (by divisibility). 

The two illustrative examples that we have considered, namely, a tilted dephasing and relaxation process, and the noisy $\sqrt{X}$ gate, are both single-qubit processes. We emphasize, however, that our theory applies equally to multi-qubit gates and that negative PL parameters may similarly arise there, providing a natural direction for future investigation.

\section{Methods}
\paragraph{\normalfont\small\textbf{Perturbation theory of PL parameters.}}In the rotating frame, the noisy implementation of the $R_x(\vartheta)$ gate is $\Phi=e^{t_g\mathcal L}$, where $\mathcal L$ is given by Eqs.~\eqref{eq:generator} and \eqref{eq:dissipator}, with a time-independent Hamiltonian $H=AX/2$ and time-independent Kossakowski matrix, which, using the Pauli matrices as the operator basis, reads
\begin{equation}
    \Gamma_{}=\frac{1}{4}\begin{pmatrix}
        \gamma&i\gamma&0\\
        -i\gamma&\gamma&0\\
        0&0&4\gamma_\varphi
    \end{pmatrix}.
\end{equation}

The transfer matrix of $\Phi$ is given by the exponential $T_\Phi=e^{t_gT_{\mc L}}$, where $T_{\mc L}$ denotes the transfer matrix of $\mathcal L$. Decomposing $T_{\mc L}$ into its  Hamiltonian and dissipative components, $T_{\mc L}=T_\mc H+T_{\mc D}$, we may write $T_\Phi=e^{t_g(T_\mc H+T_\mc D)}=e^{t_gT_{\mc H}}e^\mc C$, where $\mc C$ arises from a cumulant-like expansion~\cite{Feynman1951,Wilcox1967,Emerson2005}. Retaining terms up to second order in the parameters $\gamma t_g$ and $\gamma_\varphi t_g$, we obtain
\begin{align}
    \mc C^{(1)}=\int_0^{t_g}\mathrm ds\ e^{-sT_{\mc H}}T_{\mc D}e^{sT_{\mc H}}\equiv\int_0^{t_g}\mathrm ds\ T_{\mc D}(s),\\
    \mc C^{(2)}=\frac{1}{2}\int_0^{t_g}\mathrm ds_1\ \int_0^{s_1}\mathrm ds_2\ \left[T_{\mc D}(s_1),T_{\mc D}(s_2)\right].
\end{align}
The resulting error channel of the noisy gate is therefore described by $T_\mc E=e^{-t_gT_\mathcal H}T_\Phi=e^\mc C\approx e^{\mc C^{(1)}+\mc C^{(2)}}$.

Expanding the exponential to second order in the parameters $\gamma t_g$ and $\gamma_\varphi t_g$, we obtain the following perturbative approximation for the transfer matrix of the error channel,
\begin{align}
    T_\varepsilon\approx\mathds 1+\mathcal C^{(1)}+\frac{\left(\mathcal C^{(1)}\right)^2}{2}+\mathcal C^{(2)}.
\end{align}
The diagonal entries of this matrix correspond to the series expansion of the Pauli eigenvalues to second order, $f_k\approx1+f_k^{(1)}+f_k^{(2)}$. Finally, Eq.~\eqref{eq:ftol} and the second order expansion of the logarithm, $\ln f_k\approx f_k^{(1)}+f_k^{(2)}-\left(f_k^{(1)}\right)^2/2$, lead to the perturbative expressions for the PL parameters $\lambda_k$ in Eq.~\eqref{eq:lambdagate}.

\subsection{Acknowledgments}
The authors acknowledge support from 
the German Federal Ministry of Research, Technology and Space
(BMFTR) under the QSolid project, Grant No.~13N16167, 
the Ministry of Economic Affairs, Labour and
Tourism Baden-Württemberg within the Competence Center
Quantum Computing (KQCBW24),
Deutsche Forschungsgemeinschaft (DFG) under Project No.~425217212 -- SFB 1432,
and the Munich Quantum Valley (K-8), which is supported by the Bavarian state government with funds from the Hightech Agenda Bayern Plus.
\bibliography{bib}
\end{document}